\newcommand{\sk}{\text{\rm \sffamily{sk}}}
\newcommand{\pk}{\text{\rm \sffamily{pk}}}
\newcommand{\kg}{\text{\rm \sffamily{Kg}}}
\newcommand{\sign}{\text{\rm \sffamily{Sign}}}
\newcommand{\vf}{\text{\rm \sffamily{Vf}}}
\newtheorem{mydef}{Definition}
\newtheorem{mylemma}{Lemma}
\newtheorem{mytheorem}{Theorem}
\newtheorem{myremark}{Remark}
\newcommand{\DSS}{S}
\newcommand{\LOTS}{{\rm L-OTS}}
\newcommand{\WOTS}{{\rm W-OTS}$^{+}$}
\begin{document}
\title{Proof-of-forgery for hash-based signatures}

\author
{
E.O. Kiktenko\and
M.A. Kudinov\and
A.A. Bulychev\and
A.K. Fedorov
}

\authorrunning{E.O. Kiktenko et al.}
\institute{Russian Quantum Center, Skolkovo, Moscow 143025, Russia\\
\email{e.kiktenko@rqc.ru; akf@rqc.ru}}

\maketitle
\begin{abstract}
In the present work, a peculiar property of hash-based signatures allowing detection of their forgery event is explored.
This property relies on the fact that a successful forgery of a hash-based signature most likely results in a collision with respect to the employed hash function, 
while the demonstration of this collision could serve as convincing evidence of the forgery.
Here we prove that with properly adjusted parameters Lamport and Winternitz one-time signatures schemes could exhibit a forgery detection availability property.
This property is of significant importance in the framework of crypto-agility paradigm since the considered forgery detection serves as an alarm that the employed cryptographic hash function becomes insecure to use and the corresponding scheme has to be replaced.
\keywords{hash-based signatures \and Lamport signature \and Winternitz signature \and crypto-agility.}
\end{abstract}

\section{\uppercase{Introduction and problem statement}}
\label{sec:introduction}

Today, cryptography is an essential tool for protecting the information of various kinds.
A particular task that is important for modern society is to verify the authenticity of messages and documents effectively. 
For this purpose, one can use so-called digital signatures.
An elegant scheme for digital signatures is to employ one-way functions, which are one of the most important concepts for
public-key cryptography. 
A crucial property of public-key cryptography based on one-way functions is that it provides a computationally simple algorithm for legitimate users (e.g., for key distribution or signing a document), 
whereas the problem for malicious agents is extremely computationally expensive. 
It should be noted that the very existence of one-way functions is still an open conjecture.
Thus, the security of corresponding public-key cryptography tools is based on unproven assumptions about the computational facilities of malicious parties. 

Assumptions on the security status of cryptographic tools may change with time. 
For example, breaking the RSA cryptographic scheme is  at least as hard as factoring large integers~\cite{RSA1978}.
This task is believed to be extremely hard for classical computers, but it appeared to be solved in polynomial time with the use of a large-scale quantum computer using Shor's algorithm~\cite{Shor1997}.  
A full-scale quantum computer that is capable of launching Shor's algorithm for realistic RSA key sizes in a reasonable time is not yet created.
At the same time, there are no identified fundamental obstacles that prevent from development of quantum computers of a required scale.
Thus, prudent risk management requires defending against the possibility that attacks with quantum computers will be successful.

A solution for the threat of creating quantum computers is the development of a new type of cryptographic tools that strive to remain secure even under the assumption that the malicious agent has a large-scale quantum computer.
This class of quantum-safe tools consists of two distinct methods~\cite{Wallden2019}.
The first is to replace public-key cryptography with quantum key distribution, which is a hardware solution based on transmitting information using individual quantum objects. 
The main advantage of this approach is that the security relies not on any computational assumptions, but on the laws of quantum physics~\cite{Gisin2002}. 
However, quantum key distribution technologies today face a number of important challenges such as secret key rate, distance, cost, and practical security~\cite{Lo2016}.

Another way to guarantee the security of communications is to use so-called post-quantum (also known as quantum-resistant) algorithms, 
which use a specific class of one-way functions that are believed to be hard to invert both using classical and quantum computers~\cite{Bernstein2017}. 
The main criticism of post-quantum cryptography is the fact that they are again based on computational assumptions so that there is no strict proof that they are long-term secure. 

In our work, we consider a scenario, where an adversary finds a way to violate basic mathematical assumptions underlying the security of a particular cryptographic primitive.
Thus, the adversary becomes able to perform successful attacks on information processing systems, which employ the vulnerable cryptographic primitive in their workflow.
At the same time, it is in the interests of the adversary that the particular cryptographic primitive be in use as long as possible since its replacement with  another  one eliminates an obtained advantage.
Thus, the preferable strategy of an attacker is to hide the fact that the underlying cryptographic primitive has been broken. 
It can be realized by performing attacks in such a way that their success could be explained by some other factors (e.g. user negligence, hardware faults, and etc.), but not the underlying cryptographic primitive.
An illustrative example of such a strategy is hiding the information about the successes of the Enigma system cryptoanalysis during World War II.

Broadly speaking, the question we address in the present work is as follows: Is it possible to supply a new generation of post-quantum cryptographic algorithms with some kind of alarm indicating that they are broken?
We argue that the answer to this question is partially positive, and the property, which we refer to as a \emph{forgery detection availability}, 
can be realized by properly designed hash-based signatures. 
The intuitive idea behind this property is that a forgery of a hash-based signature most likely results in finding a collision with respect to the underlying cryptographic hash function (see Fig.~\ref{fig:preimages}), and so the demonstration of this collision can serve as convincing evidence of the forgery and corresponding vulnerability of the employed cryptographic hash function.
We refer to the mathematical scheme for the evidence of the forgery event as a \emph{proof-of-forgery} concept.
We also would like to emphasize the fact that some of widespread hash functions have been compromised after their publication~\cite{MD4Collisions,MD5Collisions,SHA1Collisions}, therefore the considered problem is more than just of academic interest.

\begin{figure}[t]
	\centering
	\includegraphics[width=\linewidth]{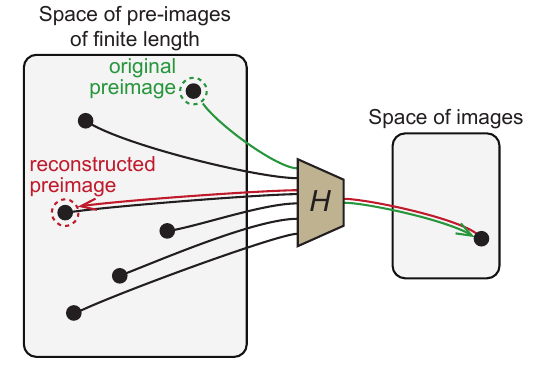}
	\caption{Demonstration of the idea behind proving the fact of the hash-based signature forgery. 
		In order to forge the signature, an adversary finds a valid preimage for a given image of a cryptographic hash function. 
		If the size of the preimages space is large enough then the preimage obtained adversary is most likely different from the legitimate user's one. 
		Disclosing the colliding preimage could serve as evidence that a particular hash function is vulnerable.}
	\label{fig:preimages}
\end{figure}

In the present work, we illustrate the forgery detection availability property for Lamport~\cite{PQBookChapter,Lamport1979} and Winternitz~\cite{PQBookChapter} one-time signatures schemes.
First, we consider the Lamport scheme, which is paradigmatically important: It is the first and the simplest algorithm among hash-based schemes.
However, the Lamport scheme is not widely used in practice. 
Then we analyze the Winternitz scheme, particularly the variant presented in Ref.~\cite{Hulsing2013}, which can be considered as a generalization of the Lamport scheme that introduced a size-performance trade-off.
Variations of the Winternitz scheme are used as building blocks in a number of modern hash-based signatures, 
such as LMS~\cite{LMS}, XMSS~\cite{XMSS}, SPHINCS~\cite{SPHINCS} and its improved modifications~\cite{SPHINCSplus,GravitySPHINCS}, as well as applications such as IOTA distributed ledger~\cite{IOTA}.

The paper is organized as follows.
In Sec.~\ref{sec:hash}, we give a short introduction to the scope of hash-based signatures.
In Sec.~\ref{sec:generalscheme}, we provide a general scheme of detecting signature forgery event and define a property of the $\varepsilon$-forgery detection availability ($\varepsilon$-FDA).
In Sec.~\ref{sec:Lamport}, we consider the $\varepsilon$-FDA property for the generalized Lamport one-time signature (\LOTS{}).
In Sec.~\ref{sec:Winternitz}, we consider the $\varepsilon$-FDA property for the Winternitz one-time signature (\WOTS{}).
We summarize the results of our work in Sec.~\ref{sec:Concl}. 

\section{\uppercase{Hash-based signatures}} \label{sec:hash}

Hash-based digital signatures~\cite{Bernstein2017,PQBook} are one of the post-quantum alternatives for currently deployed signature schemes, which have gained a significant deal of interest.
The attractiveness of hash-based signatures is mostly due to low requirements to construct a secure scheme. 
Typically, a cryptographic random or pseudorandom number generator is needed, and a function with some or all of preimage, 
second-preimage, and collision resistance properties, perhaps, in their multi-target variety~\cite{SPHINCSplus,XMSSTheory,PQBookChapter}. 
Some schemes for hash-based signatures require a random oracle assumption~\cite{random_oracle} to precisely compute their bit security level~\cite{LMSTheory}.

Up to date known quantum attacks based on Grover's algorithm~\cite{Grover1996} are capable to find a preimage and a collision with time growing sub-exponentially with a length of hash function output~\cite{BrassardSearching,BrassardCollisions}.
Specifically, it is proven that in the best-case scenario Grover's algorithm gives a quadratic speed-up in a search problem~\cite{Grover1996}.
While this area is a subject of ongoing research and debates~\cite{BernsteinSHARCS,BernsteinQuantMultiTargetCollisions,CNSQCollisions}, hash-based signatures are considered resilient against quantum computer attacks.
Meanwhile, the overall performance of hash-based digital signatures makes them suitable for the practical use, and several algorithms have been proposed for standardization
by NIST (SPHINCS$^+$ \cite{SPHINCSplus}, Gravity-SPHINCS \cite{GravitySPHINCS}) and IETF (LMS \cite{LMS}, XMSS \cite{XMSS}).

We note that the hash-based digital signature scheme can be instantiated with any suitable cryptographic hash function.
In practice, standardized hash functions, such as SHA, are used for this purpose since they are presumed to satisfy all the necessary requirements.
The availability of changing a core cryptographic primitive without a change in the functionality of the whole information security system fits a paradigm of \emph{crypto-agility},
which is the basic principle of modern security systems development with the built-in possibility of component replacement.

\section{\uppercase{Proving the fact of a forgery}} \label{sec:generalscheme}

Here we present a general framework for the investigation of the proof-of-forgery concept. 
We start our consideration by introducing a generic deterministic digital signature scheme.

\begin{mydef}[Deterministic digital signature scheme]
	A deterministic digital signature scheme (DDSS) $\DSS=(\kg, \sign, \vf)$ is a triple of algorithms that allows performing the following tasks:
	\begin{itemize}
		\item $\DSS.\kg(1^n)\rightarrow (\sk,\pk)$ is a probabilistic key generation algorithm that outputs a secret key \sk, aimed at signing messages, and a public key \pk, aimed at checking signatures validity, on input of a security parameter $1^n$.
		\item $\DSS.\sign(\sk,M)\rightarrow \sigma$ is a \emph{deterministic} algorithm that outputs a signature $\sigma$ under secret key $\sk$ for a message $M$.
		\item $\DSS.\vf(\pk,\sigma,M)\rightarrow v$ is a verification algorithm that outputs $v=1$ if the signature $\sigma$ of the signed message $M$ is correct under the public key $\pk$, and $v=0$ otherwise.
	\end{itemize}
\end{mydef}

We note that the deterministic property of the DDSS is defined by the fact that for a given pair $(\sk,M)$ the algorithm $\DSS.\sign(\sk,M)$ always generates the same output.

The standard security requirement for digital signature schemes is their existential unforgeability under chosen message attack (EU-CMA).
The chosen message attack setting allows the adversary to choose a set of messages that a legitimate user has to sign.
Then the existential unforgeability property means that the adversary should not be able to construct any valid message-signature pair $(M^\star, \sigma^\star)$, where the message $M^\star$ is not previously signed by a legitimate secret key holder. 
In the present work, we limit ourselves to the case of one-time signatures, so the adversary is allowed to obtain a signature for a single message only.
The generalization to the many-time signature schemes is left for future research.

In the present work, we consider a stronger security requirement known as \emph{strong unforgeability} under chosen message attack (SU-CMA).
A DDSS is said to be SU-CMA if it is EU-CMA, and given signature $\sigma$ on some message $M$, the adversary cannot even produce a new signature $\sigma^{*}\neq\sigma$ on the message $M$.
We note that SU-CMA schemes are used for constructing chosen-ciphertext secure systems and group signatures~\cite{Boneh2006,StrongUnforg}.
The security is usually proven under the assumption that the adversary is not able to solve some classes of mathematical problems, such as integer factorization, discrete logarithm problem, or inverting a cryptographic hash function.
Here we consider the case where this assumption is not fulfilled.

Let us discuss the following scenario involving three parties: An honest legitimate signer $\mathcal{S}$, an honest receiver $\mathcal{R}$, and an adversary $\mathcal{A}$.
At the beginning ({\bf step~0}) we assume that $\mathcal{S}$ possesses  a pair $(\sk,\pk)\leftarrow S.\kg$, while $\mathcal{R}$ and $\mathcal{A}$ have a public key $\pk$ of $\mathcal{S}$, 
and they have no any information about the corresponding secret key $\sk$ (see Table \ref{tab:steps}).

\begin{table*}[t] \scriptsize
	\begin{center}
		\begin{tabular}{|c|l|l|l|}
			\hline
			&  Signer $\mathcal{S}$ & Adversary $\mathcal{A}$ & Receiver $\mathcal{R}$ \\ \hline 
			Step 0 & $\sk$, $\pk$ &  $\pk$ & $\pk$ \\
			Step 1 & $\sk$, $\pk$, $(M, \sigma)$  & $\pk$, $(M, \sigma)$  & $\pk$, [$(M, \sigma)$]  \\
			Step 2 & $\sk$, $\pk$, $(M, \sigma)$  & $\pk$, $(M, \sigma)$, $(M^\star, \sigma^\star)$  & $\pk$, [$(M, \sigma)$], $(M^\star, \sigma^\star)$  \\
			Step 3 & $\sk$, $\pk$, $(M, \sigma)$, $(M^\star, \sigma^\star)$ & $\pk$, $(M, \sigma)$, $(M^\star, \sigma^\star)$  &  $\pk$, [$(M, \sigma)$], $(M^\star, \sigma^\star)$ \\
			Step 4 & $\sk$, $\pk$, $(M, \sigma)$, $(M^\star, \sigma^\star)$, $E$ & $\pk$, $(M, \sigma)$, $(M^\star, \sigma^\star)$, [$E$] & $\pk$,  [$(M, \sigma)$], $(M^\star, \sigma^\star)$, $E$ \\
			\hline
		\end{tabular}
		\vspace{4pt}
		\caption{Message-signature pairs and keys available to involved parties on each step of the scenario, 
			where the adversary $\mathcal{A}$ makes a successful CMA obtaining a signature $\sigma$ for some message $M$, 
			and forges a signature $\sigma^\star$ for some new message $M^\star\ne M$ under the public key $\pk$ of the signer $\mathcal{S}$.
			However, the signer $\mathcal{S}$ is able to construct the corresponding proof-of-forgery message $E$ in order to convince the receiver $\mathcal{R}$ that the forgery event happened.
			Square brackets correspond to the optional message-signature transmission.}
		\label{tab:steps}
	\end{center}
\end{table*}

At {\bf step~1} $\mathcal{A}$ forces $\mathcal{S}$ to sign a message $M$ of $\mathcal{A}$'s choice.
In the result, $\mathcal{A}$ obtains a valid message-signature pair $(M,\sigma)$.
At this step, $\mathcal{R}$ may or may not know about the fact of signing $M$ by $\mathcal{S}$.

Then at {\bf step 2} $\mathcal{A}$ performs an existential forgery by producing a new message-signature pair $(M^\star,\sigma^\star)$ with $M^\star\neq M$. 
Below we introduce a formal definition of the signature forgery and specify two different cases.

\begin{mydef}[Signature forgery and its types]
	A signature $\sigma^\star$ is called a forged signature of the message $M^\star$ under the public key $\pk$ and the signature scheme $\DSS$ if $\DSS.\vf(\pk,\sigma^\star,M^\star) \rightarrow 1$,
	where the message $M^\star$ has not been signed by the legitimate sender possessing secret key $\sk$ corresponding to $\pk$.
	The following two cases are possible.
	\begin{itemize}
		\item A pair $(M^\star,\sigma^\star)$ is called a forgery of type I if the signature $\sigma^\star$ has been previously generated by the legitimate user a signature for some message other than $M^\star$. 
		That is, there is a message $M$ with $\DSS.\sign(\sk,M)\rightarrow\sigma^\star$ previously signed by a legitimate user.
		\item A pair $(M^\star,\sigma^\star)$ is called a forgery of type II if the signature $\sigma^\star$ has not been previously generated by the legitimate user. 
		That is, there has not been a message with signature $\sigma^\star$, signed by the legitimate user.
	\end{itemize}
\end{mydef}

The type I forgery can take place if the signature algorithm $\DSS.\sign$ calculates a \emph{digest} of an input message and then computes a signature of the corresponding digest. 
In this case, the adversary $\mathcal{A}$ may find a collision of the digest function, and then force the legitimate user to sign a first colliding message by using it as $M$, 
and automatically obtain a valid signature for the second colliding message (use it as $M^\star$). 

An example of type II forgery is the reconstruction of the $\sk$ from $\pk$ using an efficient algorithm (in analogy to the use of Shor's algorithm on a quantum computer for the RSA scheme). 
We note that in our consideration it is assumed that the only way for the adversary $\mathcal{A}$ to forge the signature for $M^\star$ is to employ advanced mathematical algorithms and/or unexpectedly powerful computational resources.
In other words, we do not consider any side-channel attacks or other forms of secret key ``stealing'', such as social engineering and others.

Coming back to the considered scenario, at the {\bf step 2}, $\mathcal{A}$ sends a pair $(M^\star, \sigma^\star)$ to $\mathcal{R}$ claiming that $M^\star$ was originally signed by $\mathcal{S}$.
If the signature is successfully forged by the adversary $\mathcal{A}$, then this could be the end of the story.

However, we suggest accomplishing this scenario by the following next steps.
At the {\bf step 3}, $\mathcal{R}$ sends a message $(M^\star, \sigma^\star)$ directly to $\mathcal{S}$ in order to request an additional confirmation.
Then $\mathcal{S}$ observes a valid signature $\sigma^\star$ of the corresponding message $M^\star$, which was not generated by him. 
The concrete issue we address in the present work is whether $\mathcal{S}$ is able to \emph{prove} the fact of a forgery event.
Here we formally introduce a proof-of-forgery concept, which is mathematical evidence that someone cheats with signatures by employing computational resources or advanced mathematical algorithms.

\begin{mydef}[Proof-of-forgery of type I]
	\label{def:PoFI}
	A set $E=(\pk, \sigma^\star, M, M^\star)$ is called a proof-of-forgery of type I (PoF-I) for a DDSS $\DSS$ if 
	for $M\neq M^\star$ there is a valid signature $\sigma^\star$ for these two messages, i.e. the following relations hold:
	\begin{equation}
	\DSS.\vf(\pk,\sigma^\star, M^\star) \rightarrow 1, \quad \DSS.\vf(\pk, \sigma^\star, M) \rightarrow 1.
	\end{equation}
\end{mydef}

Obviously, if the adversary $\mathcal{A}$ performs the type I forgery, then $\mathcal{S}$ is able to prove this fact by demonstrating $M$ to $\mathcal{R}$ at {\bf step 4}.
Thus, $\mathcal{S}$ and $\mathcal{R}$ have the complete PoF-I set $E=(\pk, \sigma^\star, M, M^\star)$, and they are sure that someone has an ability to break SUF-CMA property~\cite{Brendel2020},
which is typically beyond the consideration in standard computational hardness assumptions.
Moreover, they can use the set $E$ to prove the fact of the forgery event to any third party since $E$ contains a public key $\pk$.
We also note that it is possible to prove the fact of a forgery of type I for any DDSS.  
The situation in the PoF-II case is more complicated.

\begin{mydef}[Proof-of-forgery of type II]
	\label{def:PoFII}
	A set $E=(\pk,\widetilde{\sigma}^\star, \sigma^\star, M^\star)$ is called a proof-of-forgery of type II (PoF-II) for a DDSS $\DSS$ if 
	for a message $M^\star$ there are distinct valid signatures $\widetilde{\sigma}^\star\neq\sigma^\star$, i.e. the following relations hold:
	$$
	\DSS.\vf(\pk,\widetilde{\sigma}^\star, M^\star) \rightarrow 1, \quad \DSS.\vf(\pk,\sigma^\star, M^\star) \rightarrow 1.
	$$
\end{mydef}

The ability of the adversary $\mathcal{A}$ to perform a forgery of type II depends on a particular deterministic signature scheme $\DSS$.
Suppose that $\mathcal{A}$ has succeeded in obtaining $\sk$ from $\pk$ (e.g. by using Shor's algorithm and RSA-like scheme), 
then it is impossible for $\mathcal{S}$ to convince $\mathcal{R}$ that $(M^\star, \sigma^\star)$ was not generated by $\mathcal{S}$.
However, if the adversary $\mathcal{A}$ has succeeded in obtaining a valid, but \emph{different} secret key $\sk'\neq\sk$, 
then the legitimate sender $\mathcal{S}$ is able to construct the corresponding PoF-II set by calculating $\sign(\sk,M^\star)\rightarrow\widetilde{\sigma}^\star$ with $\widetilde{\sigma}^\star\neq\sigma^\star$.

As we show below this scenario is the case for properly designed hash-based signatures.
We consider particular examples of Lamport and Winternitz one-time signatures schemes. 
We show that under favourable circumstances $\mathcal{S}$'s signature $\widetilde{\sigma}^\star$ of the corresponding message $M^\star$ is different from  $\mathcal{A}$'s signature $\sigma^\star$, 
and $\mathcal{S}$ can send it as part of PoF-II to $\mathcal{R}$ at {\bf step 4}.
Thus, the PoF-II set is successfully constructed, so legitimate parties are aware of the break of the used DDSS.

Here, we introduce a definition of an adversary who successfully forged, which allows proving their forgery.
\begin{mydef}[$\varepsilon$-forgery detection availability]
	$\varepsilon$-forgery detection availability ($\varepsilon$-FDA) for a one-time DDSS $\DSS$ is defined by the following experiment.\\
	{\bf Experiment} ${{\rm Exp}}^{FDA}_{\DSS, n}(\mathcal{A})$\\
	\indent $(\sk,\pk)\leftarrow S.\kg(1^n)$\\
	\indent $(M^\star,\sigma^\star)\leftarrow \mathcal{A}^{\sign(\sk,\cdot)}$\\
	\indent Let $(M,\sigma)$ be the query-answer pair of $\sign(\sk, \cdot)$.\\
	\indent Return 1 iff $S.\sign(\sk,M^\star)\rightarrow\sigma^\star$, $S.\vf(\pk,\sigma^\star,M^\star)\rightarrow1$, and $M^\star\neq M$.\\
	Then the DSS scheme $\DSS$ has $\varepsilon$-FDA if there is no adversary $\mathcal{A}$ that succeeds with probability $\geq\varepsilon$.
\end{mydef}

\begin{myremark}
	In our consideration, we implicitly assume that the parties are able to communicate with each other via authentic channels, e.g. when $\mathcal{R}$ sends a request to $\mathcal{S}$ at {\bf step 3}.
	One can see that in order to enable the detection of the forgery event, the authenticity of the channel should be provided with some different primitives rather than employed signatures.
	For example, one can use message authentication codes (MACs), which can be based on information-theoretical secure algorithms and symmetric keys. 
\end{myremark}
\begin{myremark}
	In the considered scheme honest users only become aware of the fact of forgery event. 
	However, the scheme does not allow determining who exactly in this scenario has such powerful computational capabilities.
	Indeed, $\mathcal{S}$ is not sure whether the signature $\sigma^\star$ is forged by $\mathcal{R}$ or by $\mathcal{A}$.
	That is why it is advisable for $\mathcal{S}$ also to send evidence $E$ to $\mathcal{A}$ as well. 
	At the same time, $\mathcal{R}$ is not sure, who is the original author of $\sigma^\star$.
	It is a possible case that $\mathcal{S}$ has forged its own signature (say, obtained two messages $M$ and $M^\star$ with a same signature $\sigma=\sigma^\star$), and sent a message $M$ to $\mathcal{A}$, who then just forwarded it $\mathcal{R}$.
	It may be in the interest of a malicious $\mathcal{S}$ to reveal $M^\star$ at the right moment and claim that it was a forgery.
\end{myremark}

\section{\uppercase{$\varepsilon$-FDA for Lamport signatures}\label{sec:Lamport}}
Here we start with a description of a generalized Lamport single bit one-time DDSS.
Consider a cryptographic hash function $H:\{0,1\}^*\rightarrow\{0,1\}^n$.
The $(n,\delta)$-Lamport one-time signature ($(n,\delta)$-\LOTS) scheme for single bit message $M\in\{0,1\}$ has the following construction.

\noindent \emph{Key pair generation algorithm} ($(\sk,\pk)\leftarrow(n,\delta)$-\LOTS.\kg).
	The algorithm generates secret and public keys in a form $(\sk_0, \sk_1)$ and $(\pk_0, \pk_1)$, with
	$\sk_i\stackrel{\$}{\leftarrow}\{0,1\}^{n+\delta}$ and $\pk_i := H(\sk_i)$ (see Fig.~\ref{fig:Lamprot}).
	Here and after $\stackrel{\$}{\leftarrow}$ stands for uniformly random sampling from a given set.
	
\noindent \emph{Signature algorithm} ($\sigma\leftarrow(n,\delta)$-\LOTS.\sign$(\sk,M))$.
	The algorithm outputs half of the secret key as a signature: $\sigma := sk_M$.
	
\noindent \emph{Verification algorithm} ($v\leftarrow(n,\delta)$-\LOTS.$\vf(\pk,\sigma,M))$).
	The algorithm outputs $v:=1$, if $H(s)=\pk_M$, and its output is $0$ otherwise.

\begin{figure}
	\centering
	\includegraphics[width=0.8\linewidth]{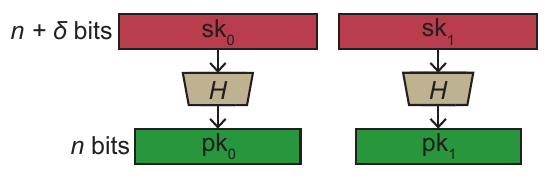}
	\caption{Basic principle of the public key construction in the $(n,\delta)$-\LOTS\,scheme.}
	\label{fig:Lamprot}
\end{figure}

The security of the $(n,\delta)$-\LOTS~scheme is based on the fact that in order to forge a signature for a bit $M$ it is required to invert the used one-way function $H$ for a part of the public key $\pk_M$, 
that is traditionally assumed to be computationally infeasible.

In our work, we particularly stress the importance of inequality between space sizes of secret keys and public keys.
Specifically, we demonstrate that for sufficiently large $\delta$ even if an adversary finds a correct preimage $\sk_M^\star$, such that $H(\sk_M^\star)=\pk_M$,
the obtained value is different from the original $\sk_M$ used for calculating $\pk_M$ by the legitimate user.
Then the signature of an honest user is different from a forged signature, and so the forgery event can be revealed.

Before turning to the main theorem, we prove the following Lemma.

\begin{mylemma}\label{lemma1}
	Consider a function $f: \{0,1\}^{n+\delta}\rightarrow\{0,1\}^n$ with $n\gg1$ and $\delta\geq0$ taken at random from the set all functions from $\{0,1\}^{n+\delta}$ to $\{0,1\}^n$. 
	Let $y_0=f(x_0)$ for $x_0$ taken uniformly at random from  $\{0,1\}^{n+\delta}$.
	Define a set 
	\begin{equation}
	{\rm Inv}(y_0):=\{x\in\{0,1\}^{n+\delta}| f(x)=y_0\}
	\end{equation}	
	of all preimages of $y_0$ under $f$. 
	Consider a randomly taken preimage $X\stackrel{\$}{\leftarrow}{\rm Inv}(y_0)$.
	Then the probability to obtain the original preimage $X=x_0$ has the following lower and upper bounds:
	\begin{itemize}
		\item[a)] $\Pr(X=x_0)> \exp(-2^{\delta})$;
		\item[b)] $\Pr(X=x_0) < 5.22\times2^{-\delta}$.  
	\end{itemize}
\end{mylemma}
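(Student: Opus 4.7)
The plan is to reduce both bounds to a calculation involving the cardinality $K := |{\rm Inv}(y_0)|$, exploiting the random-oracle assumption to describe its distribution.

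First I would observe that, conditional on $f(x_0)=y_0$, the random-oracle assumption makes the remaining $2^{n+\delta}-1$ outputs $\{f(x):x\neq x_0\}$ independent and uniformly distributed on $\{0,1\}^n$. Consequently $K-1$ is binomial with parameters $2^{n+\delta}-1$ and $2^{-n}$, and since $X$ is drawn uniformly from ${\rm Inv}(y_0)$, the quantity to estimate is $\Pr(X=x_0)=E[1/K]$, with the expectation taken over the random oracle.

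The key step is a closed-form evaluation: using the algebraic identity $\tfrac{1}{k+1}\binom{M}{k}=\tfrac{1}{M+1}\binom{M+1}{k+1}$ inside the binomial sum for $E[1/K]$, the mixed sum collapses to
\begin{equation}
\Pr(X=x_0)=\frac{1-(1-2^{-n})^{2^{n+\delta}}}{2^{\delta}}.
\end{equation}
With this identity in hand, bound (a) follows from the elementary inequality $(1-2^{-n})^{2^{n+\delta}}<e^{-2^{\delta}}$ (a consequence of $\ln(1-x)<-x$), combined with $e^{t}>1+t$, to give $(1-e^{-2^{\delta}})/2^{\delta}>e^{-2^{\delta}}$.

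For bound (b), the same closed form already yields the clean estimate $\Pr(X=x_0)<2^{-\delta}$ by dropping the nonnegative subtracted term. A more probabilistic derivation that reproduces the specific constant $5.22$ would split the expectation according to whether $K\le 2^{\delta}/c$ or $K>2^{\delta}/c$, bounding the small-$K$ piece by $\Pr(K\le 2^{\delta}/c)$ via Chebyshev's inequality (using $E[K]\approx 1+2^{\delta}$ and ${\rm Var}(K)\approx 2^{\delta}$) and the large-$K$ piece by $c\cdot 2^{-\delta}$; the total $[(1-1/c)^{-2}+c]\cdot 2^{-\delta}$ is minimized near the root of $(c-1)^{3}=2c$ (numerically $c\approx 2.75$), giving $\approx 5.22$. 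The only step that requires any real thought is the telescoping of the $E[1/K]$ sum via the binomial-coefficient identity; everything else reduces to routine real-variable inequalities.
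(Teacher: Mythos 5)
Your proposal is correct, and its main route is genuinely different from (and sharper than) the paper's. The paper works directly with the sum $\sum_N \frac{1}{N}\Pr(\mathcal{N}=N)$: for the lower bound it keeps only the $N=1$ term and evaluates $\Pr(\mathcal{N}=1)\simeq\exp(-2^\delta)$ asymptotically, and for the upper bound it splits the sum at $N_0=k2^\delta$, controls the head by a Chebyshev-type tail bound on the binomial and the tail by $1/N_0$, then optimizes $k\approx 0.36$ to get the constant $5.22$ --- exactly the ``more probabilistic derivation'' you sketch as a fallback, with $k=1/c$. Your primary route instead evaluates $E[1/K]$ in closed form via $\frac{1}{k+1}\binom{M}{k}=\frac{1}{M+1}\binom{M+1}{k+1}$, giving $\Pr(X=x_0)=\bigl(1-(1-2^{-n})^{2^{n+\delta}}\bigr)/2^{\delta}$ exactly (here $(M+1)p=2^{\delta}$). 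This buys you three things: part (a) becomes a rigorous strict inequality for all $n$ (via $\ln(1-x)<-x$ and $e^{t}>1+t$) rather than an asymptotic statement relying on $n\gg 1$; part (b) improves to the clean bound $2^{-\delta}$, which trivially implies the paper's $5.22\times 2^{-\delta}$; and the argument avoids the paper's somewhat rough binomial tail estimate altogether. Both approaches agree on the setup ($K-1\sim\mathrm{Bin}(2^{n+\delta}-1,\,2^{-n})$ and $\Pr(X=x_0)=E[1/K]$), so there is no gap --- your version is simply the tighter one.
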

\begin{proof}
	Let $\mathcal{N}:=|{\rm Inv}(y_0)|$ be a number of preimages of $y_0$ under $f$.	
	Due to the random choice of $f$, it is given by $\mathcal{N}=1+\widehat{\mathcal{N}}$, 
	where $\widehat{\mathcal{N}}$ is a random variable having binomial distribution ${\rm Bin}(2^{-n},2^{n+\delta}-1)$ with the success probability $2^{-n}$ and number of trials equal to $2^{n+\delta}-1$.
	Then the corresponding probability that a randomly chosen element $X$ from ${\rm Inv}(y_0)$ is equal to $x_0$ is as follows: 
	\begin{equation}\label{eq:summation}
	\Pr(X=x_0)=\sum_{N=1}^{2^{n+\delta}-1}\frac{1}{N}\Pr(\mathcal{N}=N).
	\end{equation}
	In order to obtain the lower bound for $\Pr(X=x_0)$, we consider only the first term in Eq.~\eqref{eq:summation} and arrive at the following inequality: 
	\begin{multline}
	    \Pr(X=x_0)>\Pr(\mathcal{N}=1)=\left(1-2^{-n}\right)^{2^{n+\delta}-1}\\
	    \simeq\left(1-2^{-n}\right)^{2^{n+\delta}}\simeq\exp(-2^\delta),
	\end{multline}
	where we use the fact that $(1-2^{-n})^{2^n}\simeq \exp(-1)$ for $n\gg 1$.
	This proves part a) of Lemma~\ref{lemma1}. 
	
	In order to obtain the upper bound for $\Pr(X=x_0)$, we split the sum in Eq.~\eqref{eq:summation} into following two parts:  
	\begin{multline}
	\Pr(X=x_0)=\sum_{N=1}^{N_0}\frac{1}{N}\Pr(\mathcal{N}=N)+\\\sum_{N=N_0+1}^{2^{m}-1}\frac{1}{N}\Pr(\mathcal{N}=N)
	\end{multline}
	where $N_0:=k2^\delta\geq 1$ for some $k\in(0,1)$.
	The first part can be bounded as follows:
	\begin{equation} \label{eq:bnd1}
	\begin{split}
	\sum_{N=1}^{N_0}\frac{1}{N}\Pr(\mathcal{N}=N)&\leq\Pr(\mathcal{N}\leq N_0)\\
	&\leq\frac{(2^{n+\delta}-N_0)2^{-n}}{(2^{\delta}-N_0)^2}\\&<\frac{2^{-\delta}}{(k-1)^2},
	\end{split}
	\end{equation}
	where we use a bound for the cumulative binomial distribution function~\cite{Feller}.
	For the second part we consider the following bound:
	\begin{equation} \label{eq:bnd2}
	\begin{split}
	\sum_{N=N_0+1}^{2^{m}-1}\frac{1}{N}\Pr(\mathcal{N}=N)&<\frac{1}{N_0}\sum_{N=N_0+1}^{2^{m}-1}\Pr(\mathcal{N}=N)\\
	&<\frac{1}{N_0}=\frac{2^{-\delta}}{k}.
	\end{split}
	\end{equation}
	By combining Eq.~\eqref{eq:bnd1} with Eq.~\eqref{eq:bnd2} and setting $k:=0.36$, which corresponds to a minimum of $\Pr(X=x_0)$, we obtain $\Pr(X=x_0)<5.22\times2^{-\delta}$.
	This proves part b) of Lemma~\ref{lemma1}. 
\end{proof}

\begin{myremark}
	The bound for commutative binomial distribution employed in~\eqref{eq:bnd1} is rather rough, however, it is quite convenient for the purposes of further discussion.
	Tighter bound can be obtained, e.g. using the technique from Ref.~\cite{Zubkov2013}.
\end{myremark}

Next we assume that the number of $(n+\delta)$-bit preimages for $H(x)$ with $x \stackrel{\$}{\leftarrow} \{0,1\}^{n+\delta}$ behaves in the same way as for a random function from $\{0,1\}^{n+\delta}$ to $\{0,1\}^n$.
The main result on the FDA property of the $(n,\delta)$-\LOTS{} scheme can be formulated as follows:
\begin{mytheorem}\label{LOTSTheorem}
	$(n,\delta)$-\LOTS~scheme has $\varepsilon$-FDA with $\varepsilon<5.22\times2^{-\delta}$. 
\end{mytheorem}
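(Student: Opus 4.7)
The plan is to reduce the FDA experiment for $(n,\delta)$-\LOTS{} to a direct application of Lemma~\ref{lemma1}. Since the scheme signs a single bit, the adversary's oracle query $M\in\{0,1\}$ returns $\sigma=\sk_M$, and any forgery with $M^\star\ne M$ must have $M^\star=1-M$, so the forged signature $\sigma^\star$ must lie in $\mathrm{Inv}(\pk_{M^\star})$ under $H$. The experiment outputs $1$ precisely when, in addition, $\sigma^\star=\sk_{M^\star}$, i.e.\ when the forged preimage happens to coincide with the legitimate half of the secret key that was never revealed by the signing oracle.

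First I would set up the adversary's view: it knows $\pk_0,\pk_1$, the queried half $\sk_M$, and all responses of the random oracle $H$ on the queries of its choice. The only information in this view that constrains $\sk_{M^\star}$ is the single equation $H(\sk_{M^\star})=\pk_{M^\star}$, because $\sk_{M^\star}$ is generated independently of $\sk_M$ and is uniform on $\{0,1\}^{n+\delta}$. Under the random-oracle assumption this implies that the posterior distribution of $\sk_{M^\star}$ conditioned on the adversary's entire view is uniform on $\mathrm{Inv}(\pk_{M^\star})$.

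Using this posterior uniformity, for any adversary strategy the conditional probability that its output $\sigma^\star$ equals $\sk_{M^\star}$ is at most $1/|\mathrm{Inv}(\pk_{M^\star})|$; averaging over the randomness of $H$ and of key generation, this is precisely the quantity $\Pr(X=x_0)$ analysed in Lemma~\ref{lemma1} with $f=H$, $x_0=\sk_{M^\star}$, and $y_0=\pk_{M^\star}$. Part~b) of Lemma~\ref{lemma1} then immediately yields
\[
\Pr\!\left[\text{\rm\sffamily{Exp}}^{FDA}_{\DSS,n}(\mathcal{A})=1\right]<5.22\times 2^{-\delta}
\]
for every adversary $\mathcal{A}$, which is exactly the bound claimed in the theorem.

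The step I expect to be the main obstacle is justifying the posterior uniformity rigorously rather than intuitively. One has to rule out the scenario in which the adversary's random-oracle queries accidentally hit $\sk_{M^\star}$ and thereby pinpoint it; in the lazy-sampling formulation of the random oracle this bad event occurs with probability at most $q_{\mathrm{RO}}\cdot 2^{-(n+\delta)}$, which for any reasonable query budget is negligible compared with $2^{-\delta}$ and can be absorbed into the final bound. Once this technical point is cleared, the remainder of the argument is a direct reuse of Lemma~\ref{lemma1}.
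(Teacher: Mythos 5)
Your proposal is correct and takes essentially the same route as the paper: both reduce the success event $\sigma^\star=\sk_{M^\star}$ to Lemma~\ref{lemma1}(b) via the observation that, conditioned on the adversary's entire view, $\sk_{M^\star}$ is uniformly distributed over ${\rm Inv}(\pk_{M^\star})$, so no adversary can guess it with probability exceeding the bound on $\Pr(X=x_0)$. The only point worth noting is that your anticipated obstacle is not actually one: an oracle query that happens to hit $\sk_{M^\star}$ only reveals that this point lies in ${\rm Inv}(\pk_{M^\star})$, which does not distinguish it from the other preimages, so the posterior uniformity is exact and no additive $q_{\mathrm{RO}}\cdot 2^{-(n+\delta)}$ term needs to be absorbed.
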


\begin{proof}
	Consider an adversary who successfully forged a signature $\sigma^\star$ for message $M^\star$.
	From the construction of the signature scheme we have $H(\sigma^\star)=\pk_{M^\star}$.
	According to part b) of Lemma~\ref{lemma1}, the probability that the obtained value $\sigma^\star$ coincides with the original value $\sk_{M^\star}$ is bounded by $5.22\times2^{-\delta}$.
	It follows from the fact that $\sk_{M^\star}$ is generated uniformly randomly from the set $\{0,1\}^{n+\delta}$.
	Therefore, with  probability at least $1-5.22\times2^{-\delta}$ 
	the legitimate user's signature $\sk_{M^\star}\leftarrow(n,\delta)$-\LOTS$.\sign(\sk,M^\star)$ is different from the adversary's signature $\sigma^\star$ and the presence of the forgery event is then proven.
\end{proof}

\begin{myremark}
	We note that it is extremely important to employ true randomness in the $\kg$ algorithm in order to provide the independence between the results of the adversary and the original value of $\sk$.
	For this purpose, one can use, for example, certified quantum random number generators. 
\end{myremark}
\begin{myremark}
	We see that $\varepsilon$-FDA property appears only for high enough values of $\delta$.
	Meanwhile it follows from part a) of  Lemma~\ref{lemma1} that for a common case of $\delta=0$ the probability for the adversary to obtain the original value $\sk_1$ is at least $\exp(-1)\approx 0.368$ that is non-negligible.
\end{myremark}

\section{\uppercase{$\varepsilon$-FDA for the Winternitz Signature Scheme}}\label{sec:Winternitz}

Here we consider an extension of L-OTS scheme which allows signing messages of $L$-bit length.
We base our approach on a generalization of the Winternitz one-time signature (W-OTS) scheme presented in Ref.~\cite{Hulsing2013}, known as \WOTS, and used in XMSS~\cite{XMSSTheory}, SPHINCS~\cite{SPHINCS} and SPHINCS$^+$~\cite{SPHINCSplus}.
We note that \WOTS~is SU-CMA scheme~\cite{Hulsing2013}.
We refer our scheme $(n,\delta, L, \nu)$-{\WOTS} and construct it as follows.

Let us introduce the parameter $\nu\in\{1,2,\ldots\}$ defining blocks length in which a message is split during a signing algorithm, where we assume that $L$ is a multiple of $\nu$.
Let us introduce the following auxiliary values:
$w:=2^\nu, \quad 
l_1 := \lceil L/\nu \rceil,
l_2 := \lfloor\log_2(l_1(w-1))/\nu\rfloor+1, \quad
l :=l_1+l_2.$
Then we consider a family of one-way functions:
\begin{equation}
    f_{\bf r}^{(i)}: \{0,1\}^{n+\delta(w-i)} \rightarrow \{0,1\}^{n+\delta(w-i-1)},
\end{equation}
where $i\in\{1,\ldots, w-1\}$ and a parameter ${\bf r}$ belongs to some domain $\mathcal{D}$.
The employ of this parameter can correspond to XORing the result of some hash function family with a random bit-mask, as it considered in Ref.~\cite{Hulsing2013}).
We assume that $f_{\bf r}^{(i)}$ satisfies the random oracle assumption for a uniformly randomly chosen ${\bf r}$ from $\mathcal{D}$.

We then introduce a chain function $F_{\bf r}^{(i)}$, which we define recursively in the following way: 
\begin{equation}
\begin{split}
    F_{\bf r}^{(0)}(x)&=x,\\
    F_{\bf r}^{(i)}(x)&=f^{(i)}_{\bf r}(F_{\bf r}^{(i-1)}(x))~\text{for}~i\in\{1,\ldots, w-1\}.
\end{split}
\end{equation}

The algorithms of $(n,\delta, L, \nu)$-{\WOTS}~scheme are the following:

\noindent \emph{Key pair generation algorithm} ($(\sk,\pk)\leftarrow(n,\delta, L, \nu)$-{\WOTS}.\kg).
First the algorithm generates a secret key in the following form: 
$\sk:=({\bf r}, \sk_1,\sk_2,\ldots,\sk_l), \text{ with } \sk_i \stackrel{\$}{\leftarrow} \{0,1\}^{n+\delta(w-1)} \text{ and } {\bf r}\stackrel{\$}{\leftarrow}\mathcal{D}$ (see Fig.~\ref{fig:WOTS}).
Then a public key composed of the randomizing parameter ${\bf r}$ and results of the chain function employed to $\sk_i$ as follows:	$\pk:=({\bf r}, \pk_1,\pk_2,\ldots,\pk_l)$ with $\pk_i:=F_{\bf r}^{(w-1)}(\sk_i)$.

\begin{figure}
	\centering
	\includegraphics[width=\linewidth]{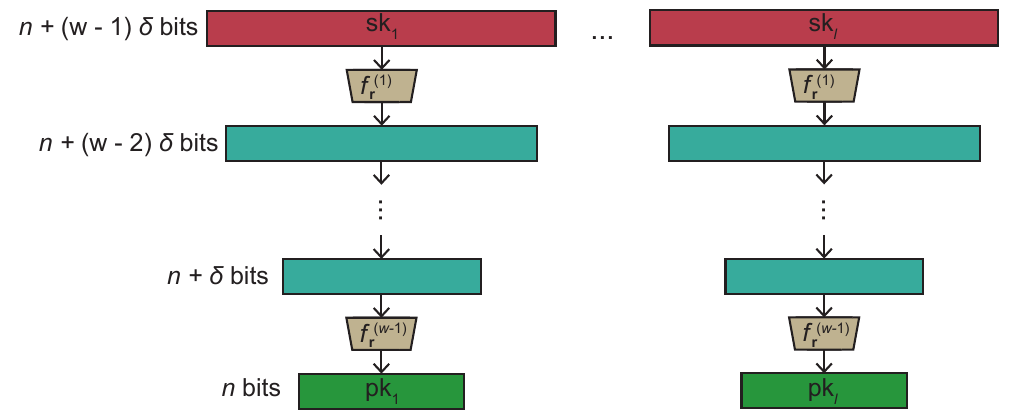}
	\caption{Basic principle of the public key construction in the $(n,\delta, L, \nu)$-{\WOTS} scheme.}	
	\label{fig:WOTS}
\end{figure}
	
\noindent \emph{Signature algorithm} ($\sigma\leftarrow(n,\delta, L, \nu)$-{\WOTS}$.\sign(\sk,M))$.
First the algorithm computes base $w$ representation of $M$ by splitting it into $\nu$-bit blocks ($M=(m_1,\ldots,m_{l_1})$, where $m_i\in\{0,\ldots,w-1\}$).
We call it a message part.
Then the algorithm computes a checksum 
$C:=\sum_{i=1}^{l_1}(w-1-m_i)$ and its base $w$ representation $C=(c_1,\ldots,c_{l_2})$. 
We call it a checksum part.
Define an extended string $B=(b_1,\ldots b_l):=M\|C$ as the concatenation of message and checksum parts.
Finally, the signature is generated as follows: $\sigma:=(\sigma_1,\sigma_2,\ldots,\sigma_l)$ with $\sigma_i:=F_{\bf r}^{(b_i)}(\sk_i).$
	
\noindent \emph{Verification algorithm} ($v\leftarrow(n,\delta, L, \nu)$-{\WOTS}.$\vf(\pk,\sigma,M))$).
The idea of the algorithm is to reconstruct a public key from a given signature $\sigma$ and then to check whether it coincides with the original public key $\pk$.
First, the algorithm computes a base $w$ string $B=(B_1,\ldots, B_l)$ in the same way as in the signature algorithm (see above). 
Then for each part of the signature $\sigma_i$ the algorithm computes the remaining part of the chain as follows: 
\begin{equation}
	\pk_i^{\rm check}:=f^{(w-1)}_{\bf r}\circ\ldots\circ f^{(b_i+1)}_{\bf r}(\sigma_i),
\end{equation} 
where $\circ$ stands for the standard functions composition.
If $\pk_i^{\rm check}=\pk_i$ for all $i\in\{1,\ldots,l\}$, then the algorithm outputs $v:=1$, otherwise $v:=0$.

The main result on the FDA property of the $(n,\delta, L, \nu)$-{\WOTS}~scheme can be formulated as follows:

\begin{mytheorem}\label{WOTSTheorem}
	The $(n,\delta, L, \nu)$-{\WOTS}~scheme has the $\epsilon$-FDA property with $\epsilon<5.22\times2^{-\delta}$.
\end{mytheorem}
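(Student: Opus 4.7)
The plan is to reduce the analysis to the single-coordinate situation already handled by Theorem~\ref{LOTSTheorem}. The first step is to invoke the standard \WOTS{} checksum argument. Because the checksum~\eqref{eq:checksum} satisfies $C+\sum_{i=1}^{l_1} m_i=l_1(w-1)$, any $M^\star\neq M$ forces the extended strings $B=(b_1,\ldots,b_l)$ and $B^\star=(b^\star_1,\ldots,b^\star_l)$ to satisfy $b^\star_{i^\star}<b_{i^\star}$ for at least one index $i^\star$. Indeed, if $b^\star_i\geq b_i$ held for every message-part index (with strict inequality somewhere, as required by $M^\star\neq M$), the checksum $C^\star$ would be strictly smaller than $C$, forcing the strict decrease into at least one checksum-part digit. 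Thus every successful forgery is obliged to invert the hash chain at some coordinate $i^\star$.

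At this critical index, both the adversary's component $\sigma^\star_{i^\star}$ and the signer's honest component $\widetilde\sigma^\star_{i^\star}:=F^{(b^\star_{i^\star})}_{\bf r}(\sk_{i^\star})$ lie in the preimage set of $\sigma_{i^\star}$ under the partial chain $g:=f^{(b_{i^\star})}_{\bf r}\circ\cdots\circ f^{(b^\star_{i^\star}+1)}_{\bf r}$, which maps $\{0,1\}^{n+\delta(w-1-b^\star_{i^\star})}$ to $\{0,1\}^{n+\delta(w-1-b_{i^\star})}$ with input/output bit-length gap $\delta(b_{i^\star}-b^\star_{i^\star})\geq\delta$. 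Since $\sk_{i^\star}$ is drawn uniformly from $\{0,1\}^{n+\delta(w-1)}$ and each $f^{(i)}_{\bf r}$ satisfies the random-oracle assumption, conditioning on the adversary's view $(\pk,\sigma,{\bf r})$ leaves $\widetilde\sigma^\star_{i^\star}$ essentially uniformly distributed over the preimages of $\sigma_{i^\star}$ under $g$. I would then invoke Lemma~\ref{lemma1}, applied to $g$ with effective parameters $n':=n+\delta(w-1-b_{i^\star})$ and $\delta':=\delta(b_{i^\star}-b^\star_{i^\star})\geq\delta$, to obtain $\Pr[\sigma^\star_{i^\star}=\widetilde\sigma^\star_{i^\star}]<5.22\times 2^{-\delta'}\leq 5.22\times 2^{-\delta}$. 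Whenever the two components differ, the signer computes $\widetilde\sigma^\star:=(n,\delta,L,\nu)$-\WOTS.$\sign(\sk,M^\star)$ and obtains $\widetilde\sigma^\star\neq\sigma^\star$, assembling the PoF-II tuple $(\pk,\widetilde\sigma^\star,\sigma^\star,M^\star)$.

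The main obstacle will be justifying the use of Lemma~\ref{lemma1} on the composed function $g$, since the lemma was stated for a single random oracle. Two points need care: first, that a composition of random oracles with successive shrinking ratios still produces preimage cardinalities whose tail behaviour is controlled by the same binomial model driving the bounds of Lemma~\ref{lemma1}; and second, that conditioning on the adversary's view leaves $\widetilde\sigma^\star_{i^\star}$ close enough to uniform over the preimage set of $\sigma_{i^\star}$ under $g$ for that bound to apply. Both should follow from the independence properties of the $f^{(i)}_{\bf r}$ by essentially the combinatorial reasoning already used in the proof of Lemma~\ref{lemma1}, so the constant $5.22$ and the exponent $2^{-\delta}$ carry through unchanged, producing the claimed bound $\epsilon<5.22\times 2^{-\delta}$.
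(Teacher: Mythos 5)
Your overall strategy --- use the checksum to locate a coordinate $i^\star$ with $b^\star_{i^\star}<b_{i^\star}$, then reduce to Lemma~\ref{lemma1} applied to the partial chain $g=f^{(b_{i^\star})}_{\bf r}\circ\cdots\circ f^{(b^\star_{i^\star}+1)}_{\bf r}$ with effective offset $\delta'=\delta(b_{i^\star}-b^\star_{i^\star})\ge\delta$ --- is exactly the paper's. There is, however, one incorrect assertion: you claim that the adversary's component $\sigma^\star_{i^\star}$ necessarily lies in the preimage set of $\sigma_{i^\star}$ under $g$. It does not. Successful verification only forces
\begin{equation*}
f^{(w-1)}_{\bf r}\circ\cdots\circ f^{(b^\star_{i^\star}+1)}_{\bf r}(\sigma^\star_{i^\star})=\pk_{i^\star},
\end{equation*}
i.e.\ $\sigma^\star_{i^\star}$ is a preimage of $\pk_{i^\star}$ under the \emph{full} remaining chain; the intermediate value $g(\sigma^\star_{i^\star})$ need not equal $\sigma_{i^\star}$, since the adversary may have inverted the chain along a branch that rejoins the legitimate chain only at some level above $b_{i^\star}$. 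The paper therefore makes an explicit two-case split: if $g(\sigma^\star_{i^\star})\neq\sigma_{i^\star}$, then because $g(\widetilde\sigma^\star_{i^\star})=\sigma_{i^\star}$ one gets $\widetilde\sigma^\star_{i^\star}\neq\sigma^\star_{i^\star}$ with certainty; only when $g(\sigma^\star_{i^\star})=\sigma_{i^\star}$ does one invoke Lemma~\ref{lemma1}. Your argument silently analyses only the second case. The omission is benign for the final bound, since the missing case yields detection with probability one, but as written the claim is false and the case split must be made explicit for the proof to be complete.

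On the points you flag as needing care, the paper is no more rigorous than you are: it simply asserts that the composition $g$ ``satisfies random oracle assumptions'' because each $f^{(k)}_{\bf r}$ does, and it applies Lemma~\ref{lemma1} without explicitly conditioning on the adversary's view. Your acknowledgment that the binomial preimage-count model underlying Lemma~\ref{lemma1} does not literally transfer to a composition of shrinking random oracles is a more honest account of that residual gap than the paper gives, so no further repair is needed there to match the paper's level of rigour.
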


\begin{proof}
	
	Consider a scenario of successful CMA on the $(n,\delta, L, \nu)$-{\WOTS}~scheme, in which an adversary first forces a legitimate user with public key 
	$\pk=({\bf r},\pk_1,\ldots,\pk_l)$ to provide him a signature $\sigma=(\sigma_1,\ldots,\sigma_l)$ for some message $M$, and then generate a valid signature $\sigma^\star=(\sigma_1^\star,\ldots,\sigma_l^\star)$ for some message $M^\star\neq M$.
	Let $(m_1,\ldots,m_{l_1})$ and $(m^\star_1,\ldots,m^\star_{l_1})$ be the $w$-base representations of $M$ and $M^\star$ correspondingly.
	Consider extended $w$-base strings $B=(b_1^0,\ldots,b_l^0)$ and $B^\star=(b_1^\star,\ldots,b_l^\star)$ generated by adding checksum parts.
	It easy to see that for any distinct $M$ and $M^\star$ there exists at least one position $j\in\{1,\ldots,l\}$ such that $b^\star_{j}<b_{j}$.
	Indeed, even if for all positions $i\in\{1,\ldots,l_1\}$ it happened that $m^\star_i>m_i$, from the definition of checksum it follows that there exists a position $j\in\{l_1+1,\ldots,l_2\}$ in checksum parts such that $b^\star_{j}<b_{j}$.
	
	Since $\sigma^\star$ is a valid signature for $M^\star$ we have
	\begin{equation}~\label{eq:trueforg}
	f_{\bf r}^{(w-1)}\circ\ldots\circ f_{\bf r}^{(b^\star_{j}+1)}(\sigma^\star_{j})
	=\pk_{j}.
	\end{equation}
	One can see that forgery event will be detected if the $j^{\rm th}$ part of the legitimate user's signature of $M^\star$ is different from the forged one (see also Fig.~\ref{fig:forg}), so that:
	\begin{equation}
	\widetilde{\sigma}^\star_{j}:=F_{\bf r}^{(b_j^\star)} (\sk_j)\neq \sigma^\star_{j}.
	\end{equation}
	
	\begin{figure}
		\centering
		\includegraphics[width=0.6\linewidth]{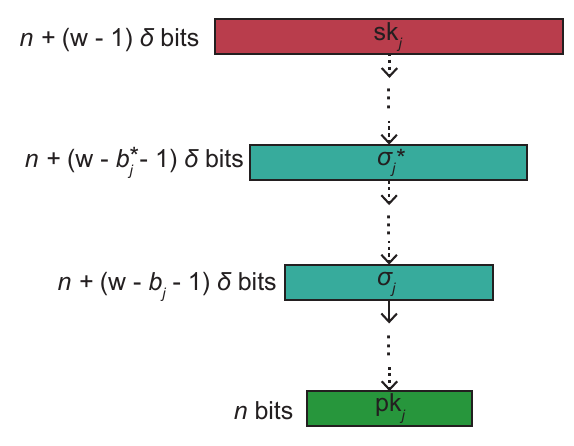}
		\caption{Illustration of the principle of PoF-II construction for the $(n,\delta, L, \nu)$-{\WOTS}~scheme.}
		\label{fig:forg}
	\end{figure}
	
	Consider two possible cases.
	The first is that the condition~\eqref{eq:trueforg} is fulfilled, but the following relation holds true:
	\begin{equation}
	f_{\bf r}^{(b_{j})}\circ\ldots\circ f_{\bf r}^{(b^\star_{j}+1)}(\sigma^\star_{j})\neq \sigma_j.
	\end{equation}
	
	In this case we obtain $\widetilde{\sigma}^\star_{j}\neq\sigma^\star_{j}$ with unit probability since 
	\begin{equation}
	\sigma_j=f_{\bf r}^{(b_{j})}\circ\ldots\circ f_{\bf r}^{(b^\star_{j}+1)}(\widetilde{\sigma}^\star_{j})\neq f_{\bf r}^{(b_{j})}\circ\ldots\circ f_{\bf r}^{(b^\star_{j}+1)}(\sigma^\star_{j}).
	\end{equation}
	In the second case we have the following identity:
	\begin{equation}
	f_{\bf r}^{(b_{j})}\circ\ldots\circ f_{\bf r}^{(b^\star_{j}+1)}(\sigma^\star_{j})=\sigma_j,
	\end{equation}
	which automatically implies the fulfilment of Eq.~\eqref{eq:trueforg}.
	Consider the function
	\begin{equation}
	    F:=f_{\bf r}^{(b_{j})}\circ\ldots\circ f_{\bf r}^{(b^\star_{j}+1)}: 
	\{0,1\}^{n^*+\delta\,\Delta}\rightarrow\{0,1\}^{n^*},
	\end{equation}
	where $\Delta:=b^0_{j}-b^\star_{j}\geq 1$ and $n^*:=n+\delta(w-b^\star_{j}-1)$.
	This function satisfies random oracle assumptions, since each of $\{f_{\bf r}^{(k)}\}_{k=b_j^\star}^{b_j}$ do.
	The random oracle assumption implies that the number of preimages of $F$ is the distributes in the same way as for random random function.
	Moreover, due to the fact that the values of $\sk_i$ are generated uniformly at random, the inputs for $F$ are also uniformly distributed over the corresponding domains.
	So, according to part b) of Lemma~\ref{lemma1}, we have the probability of the adversary to obtain $\sigma_j^\star=\widetilde{\sigma}_j^\star$ is bounded by $\epsilon < 5.22\times 2^{-\delta\,\Delta}\leq 5.22\times 2^{-\delta}$.
\end{proof}

\begin{myremark}
	One can see that the excess of the preimage space size over the image space size, given by $\delta$, is a crucial condition for the FDA property both for \LOTS{} and \WOTS{} schemes.
	We also note that in the case of the \WOTS{} scheme it is important to have such excess for all the elements of the employed one-way chain.
\end{myremark}

\begin{myremark}
	One can imagine a situation where an attacker is able to compute ${\rm sk}_i^\star \neq {\rm sk}_i$ for some $i\in\{1,\ldots,l\}$, such that
$F^{(w-1)}({\rm sk}_i^\star) = {\rm pk}_i$.
	This means that the chain collides somewhere on the way to ${\rm pk}_i$.
	As long as an attacker reveals forgeries only above this collision, there is no way for the legitimate user to present a different signature on the same message. 
	However, in line with our proof, the probability of such a scenario is limited by $\epsilon$.
\end{myremark}

In $WOTS^+$ scheme consider an
attacker who is able to compute some value $sk_i' \neq sk_i$, such that
$f^{w-1}(sk_i') = pk_i$ for one of the public-key chunks $pk_i$, this
means that the chain collides somewhere on the way to $pk_i$. One can notice 
that as long as an attacker reveals forgeries only above this collision, there is no way for the legitimate user to present a different signature on the same
message. But following our proof, this situation is limited by $\epsilon$.

\section{\uppercase{Conclusion}}\label{sec:Concl}

In this work, we have considered the $\varepsilon$-FDA property of DDSS that allows detecting a forgery event generated by advanced mathematical algorithms and/or unexpectedly powerful computational resources.
We have shown that this property is fulfilled for properly-designed hash-based signatures, in particular, for \LOTS{} and \WOTS{} schemes with properly tuned parameters.
As we have noted, the probability of the successful demonstration of the DDSS forgery event depends on an excess of preimage space sizes over image space sizes and using true randomness in the generation of secret keys in hash-based DDSS.
The important next step is to study this property for other types of hash-based signatures.

Our observation is important in the view of the crypto-agility paradigm.
Indeed, the considered forgery detection serves as an alarm that the employed cryptographic hash function has a critical vulnerability and it has to be replaced.
We note that a similar concept has been recently considered for detecting brute-force attacks on cryptocurrency wallets in the Bitcoin network~\cite{Kiktenko2019}. 
Namely, it was considered the alarm system that detects the case of stealing coins by finding a secret-public key pair for standard elliptic curve digital signature algorithm (ECDSA) used in the Bitcoin system, such that a public key hash of adversary equals a public key hash of a legitimate user.
This kind of alarm system can be of particular importance in view of the development of quantum computing technologies~\cite{Fedorov2018}.

\section*{\uppercase{Acknowledgements}}

We thank A.I. Ovseevich, A.A. Koziy, E.K. Alekseev, L.R. Akhmetzyanova, and L.A. Sonina for fruitful discussions.
This work is partially supported by Russian Foundation for Basic Research (18-37-20033).

\bibliographystyle{apalike}
{\small\bibliography{bibliography}}

\end{document}